\newtheorem{theorem}{Theorem}
\DeclareMathOperator{\E}{E}
\DeclareMathOperator{\diag}{diag}
\DeclareMathOperator{\Var}{Var}
\DeclareMathOperator{\Cov}{Cov}
\DeclareMathOperator{\zeros}{zeros}
\newcommand{\given}{\mathbin{\vert}\nolinebreak}
\newcommand{\trans}{^{\scriptscriptstyle T}}
\newcommand{\mtrans}{^{\scriptscriptstyle -T}}
\newcommand{\unit}[1]{\ensuremath{\, \mathrm{#1}}}
\newcommand{\bs}[1]{\boldsymbol{#1}}
\newcommand{\bE}{\bs{E}}
\newcommand{\bX}{\bs{X}}
\newcommand{\bY}{\bs{Y}}
\newcommand{\bZ}{\bs{Z}}
\newcommand{\ldef}{\mathrel{:=}\nolinebreak}
\newcommand{\norm}[1]{\left\Vert #1 \right\Vert}
\newcommand{\geqquery}{\buildrel ? \over \geq}
\newcommand{\Perp}{\ensuremath{\mathrel{\perp\!\!\perp}}\nolinebreak}
\title{Computation and Visualisation for large-scale Gaussian updates}
\author{Jonathan Rougier\thanks{Reader in Statistics.  Address: School
    of Mathematics, University of Bristol, University Walk, Bristol
    BS8 1TW, UK. Email \texttt{j.c.rougier@bristol.ac.uk}.}\\School of
  Mathematics\\University of Bristol\\[1ex] \and Andrew Zammit Mangion
  and Nana Schoen\\School of Geographical Sciences\\University of
  Bristol}
\date{}
\begin{document}

{
\maketitle

\begin{abstract}
  \noindent%
  In geostatistics, and also in other applications in science and
  engineering, we are now performing updates on Gaussian process
  models with many thousands or even millions of components.  These
  large-scale inferences involve computational challenges, because the
  updating equations cannot be solved as written, owing to the size
  and cost of the matrix operations.  They also involve
  representational challenges, to account for judgements of
  heterogeneity concerning the underlying fields, and diverse sources
  of observations.

  Diagnostics are particularly valuable in this situation.  We present
  a diagnostic and visualisation tool for large-scale Gaussian
  updates, the `medal plot'.  This shows the updated uncertainty for
  each observation, and also summarises the sharing of information
  across observations, as a proxy for the sharing of information
  across the state vector.  It allows us to `sanity-check' the code
  implementing the update, but it can also reveal unexpected features
  in our modelling.  We discuss computational issues for large-scale
  updates, and we illustrate with an application to assess mass trends
  in the Antarctic Ice Sheet.
  
  \bigskip\noindent%
  \textsc{Keywords: Variance update, variance bound, medal plot, principle of stable inference}
\end{abstract}

} 

\section{Introduction}
\label{sec:introduction}

Statisticians are now attempting inferences of a scale and complexity
that were unthinkable even a few years ago.  This is for a number of
reasons:
\begin{enumerate}
\item Computers are more powerful, and have larger memories,
\item New statistical techniques are available to represent judgements
  on large collections of random quantities, and to compute on those
  judgements,
\item Large new datasets, including from remote sensing, are becoming
  available, and
\item There is a political need, and research funding, to address
  inference for complex systems, notably environmental systems.
\end{enumerate}
Similar assessments have been given by \citet[ch.~1, concerning
meteorology]{kalnay02} and \citet[ch.~1, concerning decision
support]{smith10}.  In our application, outlined below, the state
vector has about $10^5$, and there are $3.5 \times 10^5$ observations.
Statistical inferences of this scale are most easily handled using a
Gaussian process prior, and the linearisation of the observation
operator; or else the use of an optimisation approach that comes to
very much the same thing \citep[e.g., as in data assimilation for
meteorology, see][]{apte08}. 

One concern in a complex inference is to verify that the code and the
model are performing sensibly, and a second is to visualise the
assimilation of very large numbers of observations.  The latter is
particularly challenging over multiple interacting processes.  We
present a visual diagnostic which meets both of these needs, based on
an upper bound on the updated variance.  It is almost obvious that the
updated variance of any measured linear combination of the state
vector has to be no larger than the smaller of its initial variance
and the observation error variance.  A scalar version of this result
was the basis for L.J.~Savage's principle of stable inference
\citep{savage62,edwards63}, in which an observation of $Y$ with error
variance $\tau^2$ had the effect of setting the updated variance of
$Y$ to $\tau^2$, regardless of the initial variance, provided only
that the initial distribution was suitably flat in the region around
the observation.  Savage used this as a demonstration of how
subjective Bayesian scientists might end up agreeing despite their
different initial judgements.  Below, we prove a similar result for a
collection of observations (section~\ref{sec:limit}).

The simplest use of our result is to trap errors, since violation of
the upper bound is \textit{prime facie} evidence of a failure in the
code.  Such failures are unlikely to occur for moderately-sized
problems, for which the matrix algebra can be programmed more-or-less
as written, but are an issue for large problems, where the calculation
must be broken into chunks, or approximated, e.g.\ using iterative
solvers.  But there is additional information available in the source
of the bound (initial variance or observation error variance), and in
the relation of the updated variance to its bound.  This leads
naturally to a visualisation tool in updates of random fields, for
which the linear combinations are often localised in the domain.  As
no reference is made to the value of the observations, this diagnostic
can be used before the observations are made available, for example in
experimental design \citep[e.g.][]{krause08}.

Section~\ref{sec:result} describes the theoretical result and its
implications, its implementation, and the `medal plot' for
visualisation.  Section~\ref{sec:computation} discusses computation
for large-scale applications, including a powerful result concerning
sparsity.  Section~\ref{sec:AIS} illustrates with an inference for
mass trends in the Antarctic Ice Sheet.

\section{Result and implications}
\label{sec:result}

\subsection{An upper bound for the updated variance}
\label{sec:bound}

Let $\bX$ be the collection of Gaussian random quantities, and $\bY
\ldef A \bX$ be the known linear combinations which are measured,
where $A$ is sometimes termed the `incidence matrix'.  Let $\bZ \ldef
\bY + \bE$ be the observations, including observation error $\bE$.
Denote the variance matrix of $\bY$ as $\Sigma$, and take the
observation error to be independent of $\bX$, with variance matrix
$T$.  If $V$ and $W$ are two variance matrices, then write $V \leq W$
exactly when $W - V$ is non-negative definite.  Then we have the
following result, which applies not just in the Gaussian case, but
also for more general second-order updating, such as the Bayes linear
approach described in \citet{gw07}.

\begin{theorem}\label{thm:bound}\samepage
  Let $\Sigma^* \ldef \Var(\bY \given \bZ)$ be the updated variance
  matrix of $\bY$.  If $\Sigma + T$ is non-singular, then $\Sigma^*
  \leq \Sigma$ and $\Sigma^* \leq T$.
\end{theorem}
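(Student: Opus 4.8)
The plan is to reduce both inequalities to the single standard second-order updating formula and then to recognise each variance decrement as a congruence transform of a positive-definite matrix. First I would record the joint second-order structure of $\bY$ and $\bZ$: since $\bZ \ldef \bY + \bE$ with $\bE$ independent of $\bX$, and hence of $\bY$, we have $\Cov(\bY, \bZ) = \Sigma$ and $\Var(\bZ) = \Sigma + T$. Substituting these into the updating formula that underlies both Gaussian conditioning and Bayes linear adjustment gives
\begin{equation}
\Sigma^* = \Sigma - \Sigma(\Sigma+T)^{-1}\Sigma ,
\end{equation}
which is well-defined precisely because $\Sigma + T$ is assumed non-singular.

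Next I would establish the bound $\Sigma^* \leq \Sigma$. Because $\Sigma$ and $T$ are variance matrices their sum is non-negative definite, and non-singularity then forces $\Sigma + T$ to be positive definite, so $(\Sigma+T)^{-1}$ is positive definite too. The decrement is
\begin{equation}
\Sigma - \Sigma^* = \Sigma(\Sigma+T)^{-1}\Sigma = \Sigma\trans(\Sigma+T)^{-1}\Sigma ,
\end{equation}
a congruence transform of a positive-definite matrix (using that $\Sigma$ is symmetric), hence non-negative definite. This is exactly $\Sigma^* \leq \Sigma$.

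For the second bound $\Sigma^* \leq T$ my preferred route is a symmetry observation that avoids any fresh computation. Conditioning on $\bZ$ fixes the sum $\bY + \bE = \bZ$, so the residual of $\bY$ about its updated mean equals minus the residual of $\bE$; consequently $\Var(\bY \given \bZ) = \Var(\bE \given \bZ)$. Since $\bY$ and $\bE$ enter $\bZ$ interchangeably, the same formula applies with the roles of $\Sigma$ and $T$ swapped, giving $\Sigma^* = T - T(\Sigma+T)^{-1}T$, whence
\begin{equation}
T - \Sigma^* = T\trans(\Sigma+T)^{-1}T \geq 0 .
\end{equation}
I expect the only delicate point to be justifying this residual identity at the level of generality claimed, namely second-order or Bayes linear updating rather than full Gaussian conditioning; the fallback, should that argument feel too slick, is simply to verify the identity $T - \Sigma + \Sigma(\Sigma+T)^{-1}\Sigma = T(\Sigma+T)^{-1}T$ algebraically by writing $T = (\Sigma+T) - \Sigma$ and expanding, which is routine. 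Either way the congruence argument then closes the proof.
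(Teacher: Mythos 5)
Your proposal is correct, and for the key step (the second inequality) it takes a genuinely different route from the paper. The first half coincides: both you and the paper obtain $\Sigma^* = \Sigma - \Sigma(\Sigma+T)^{-1}\Sigma$ and read off $\Sigma^* \leq \Sigma$ from the non-negative definiteness of the subtracted term; your congruence phrasing merely makes explicit why this holds even for singular $\Sigma$. For $\Sigma^* \leq T$, the paper proves the identity \eqref{eq:bound2} by pure matrix algebra: it equates the two expressions for zero in \eqref{eq:bound3}, rearranges, and cancels the cross terms $\Sigma(\Sigma+T)^{-1}T$ and $T(\Sigma+T)^{-1}\Sigma$, which are equal because each is symmetric and they are transposes of one another. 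Your preferred route instead derives the same identity from the probabilistic symmetry $\Var(\bY \given \bZ) = \Var(\bE \given \bZ)$, and the delicate point you flag is easily discharged at the Bayes linear level of generality the theorem claims: writing $\E_{\bZ}(\cdot)$ for the adjusted expectation, the fact that $\Cov(\bY,\bZ) + \Cov(\bE,\bZ) = \Sigma + T = \Var(\bZ)$ gives $\E_{\bZ}(\bY) + \E_{\bZ}(\bE) = \E(\bZ) + (\Sigma+T)(\Sigma+T)^{-1}\{\bZ - \E(\bZ)\} = \bZ$, so the residuals $\bY - \E_{\bZ}(\bY)$ and $\bE - \E_{\bZ}(\bE)$ are negatives of each other and the adjusted variances agree; note this uses $\Cov(\bE,\bZ) = T$, i.e.\ precisely the assumed independence of $\bE$ and $\bX$. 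Your algebraic fallback --- expanding $T(\Sigma+T)^{-1}T$ with $T = (\Sigma+T) - \Sigma$ --- also verifies \eqref{eq:bound2} and is, if anything, more economical than the paper's derivation, since it avoids the symmetry-of-cross-terms step. As to what each approach buys: the paper's derivation is self-contained matrix algebra pointing to general identities of this type (its citation of Piziak and Odell), whereas your symmetry argument explains structurally why $\Sigma$ and $T$ enter \eqref{eq:bound2} interchangeably --- $\bY$ and $\bE$ are exchangeable ingredients of $\bZ$ --- which connects naturally to the stable-inference interpretation developed in section~\ref{sec:limit}.
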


One immediate use for this result is in code verification, but it is
also the basis of further results below, which are used for
visualisation.

\begin{proof}
  As $\Cov(\bY, \bZ) = \Sigma$ and $\Var(\bZ) = \Sigma + T$, the
  updated variance of $\bY$ satisfies
\begin{equation}
  \label{eq:bound1}
  \Sigma^* = \Sigma - \Sigma (\Sigma + T)^{-1} \Sigma 
\end{equation}
\citep[see, e.g.,][chapter~3]{mkb79}.  Hence $\Sigma^* \leq \Sigma$
because the second term on the righthand side of \eqref{eq:bound1} is
non-negative definite.  If we can show that
\begin{equation}
  \label{eq:bound2}
  \Sigma - \Sigma (\Sigma + T)^{-1} \Sigma = T - T(\Sigma + T)^{-1} T ,
\end{equation}
then \eqref{eq:bound1} and the same reasoning implies that $\Sigma^*
\leq T$, completing the proof.  Start with the identities
\begin{equation}
  \label{eq:bound3}
  \bs{0} =
  \left\lbrace 
  \begin{array}{c}
    \Sigma - \Sigma ( \Sigma + T)^{-1} (\Sigma + T) , \\
    T - T ( \Sigma + T)^{-1} (\Sigma + T) .
  \end{array} \right.
\end{equation}
Equating the two terms on the righthand side and rearranging gives
\begin{displaymath}
  \Sigma - \Sigma (\Sigma + T)^{-1} \Sigma - \Sigma (\Sigma + T)^{-1} T = T - T (\Sigma + T)^{-1} T - T (\Sigma + T)^{-1} \Sigma .
\end{displaymath}
But the final terms on each side of this expression are equal, because
they are symmetric, and \eqref{eq:bound2} is proved.  [See
\citet[section~1.2]{piziak07} for more general resuts of this type.]
\end{proof}

It is important that this result holds for singular $\Sigma$, provided
that ${\Sigma + T}$ is non-singular.  This is because we may well have
replications in the observations; e.g.\ the same component of $\bX$
observed several times.  This would be represented as duplicate rows
in $A$.  Alongside replications, we may well have more observations
than components of the state vector, e.g.\ if we have multiple
instruments with overlapping footprints.  This would be represented by
an $A$ with more rows than columns.  In both of these cases
\begin{displaymath}
  \Sigma = A \Var(\bX) A\trans
\end{displaymath}
would be singular (non-negative definite but not positive definite).
No matter what the form of $A$, a non-singular $T$ is sufficient for
$\Sigma + T$ to be non-singular (positive definite).  Thus
Theorem~\ref{thm:bound} always holds if there is measurement error.

\subsection{Local and global updating}
\label{sec:local}

A further useful result concerns the relationship between the
\emph{joint} update $\Var(Y_i \given \bZ)$ and the \emph{local} update
$\Var(Y_i \given Z_i)$.  This is a special case of the following more
general result about nested updates.

\begin{theorem}\label{thm:local}\samepage
  Let $B$ and $B'$ be two subsets of the indices of $\bY$, with $B
  \supset B'$.  Then
  \begin{equation}
    \label{eq:local1}
    \Var( \bY_{\!\!B'} \given \bZ_B ) \leq \Var( \bY_{\!\!B'} \given \bZ_{B'} ) \leq T_{B',B'} .
  \end{equation}
\end{theorem}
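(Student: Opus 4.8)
The plan is to establish the two inequalities separately, since they call for different ideas. The right-hand inequality $\Var(\bY_{B'} \given \bZ_{B'}) \leq T_{B',B'}$ is just Theorem~\ref{thm:bound} applied to the sub-problem involving only the indices in $B'$: there the role of $\bY$ is played by $\bY_{B'}$, the observations are $\bZ_{B'} = \bY_{B'} + \bE_{B'}$, and the relevant variance matrices are the principal submatrices $\Sigma_{B',B'}$ and $T_{B',B'}$. The left-hand inequality $\Var(\bY_{B'} \given \bZ_B) \leq \Var(\bY_{B'} \given \bZ_{B'})$ is the statement that conditioning on the larger set $\bZ_B$, which contains $\bZ_{B'}$ because $B \supset B'$, cannot increase the updated variance; for this I would use the law of total variance.

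For the right-hand inequality, the only thing to check before quoting Theorem~\ref{thm:bound} is that the sub-problem satisfies the hypothesis, namely that $\Sigma_{B',B'} + T_{B',B'}$ is non-singular. But this matrix is exactly the principal submatrix $(\Sigma + T)_{B',B'}$, and a non-singular non-negative-definite matrix is positive definite, so every principal submatrix of $\Sigma + T$ is positive definite and hence non-singular. Theorem~\ref{thm:bound} then delivers $\Var(\bY_{B'} \given \bZ_{B'}) \leq T_{B',B'}$ at once.

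For the left-hand inequality I would condition $\bY_{B'}$ on the full set $\bZ_B$ and then decompose by the smaller set $\bZ_{B'}$. Writing the law of total variance with the outer conditioning on $\bZ_{B'}$ gives
\begin{equation*}
  \Var(\bY_{B'} \given \bZ_{B'}) = \E\!\left[ \Var(\bY_{B'} \given \bZ_B) \given \bZ_{B'} \right] + \Var\!\left[ \E(\bY_{B'} \given \bZ_B) \given \bZ_{B'} \right].
\end{equation*}
The crucial feature of Gaussian (and, more generally, second-order or Bayes linear) updating is that the updated variance $\Var(\bY_{B'} \given \bZ_B)$ is a constant matrix that does not depend on the value of $\bZ_B$, so the first term on the right is simply $\Var(\bY_{B'} \given \bZ_B)$ itself. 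The second term is a variance matrix and is therefore non-negative definite, which yields $\Var(\bY_{B'} \given \bZ_B) \leq \Var(\bY_{B'} \given \bZ_{B'})$, as required.

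The step I expect to need the most care is this last one, specifically the claim that the decomposition is legitimate in the general second-order setting and not only for genuinely Gaussian quantities. In the Bayes linear framework the adjusted variance is defined so as to be constant, and the tower and total-variance identities hold for adjusted expectations and variances, so the argument transfers verbatim; I would flag this reliance explicitly. A more computational alternative, which avoids any appeal to the total-variance law, is to expand both conditional variances through formula~\eqref{eq:bound1} in terms of the submatrices $\Sigma_{B',B}$, $\Sigma_{B,B}$ and $T_{B,B}$ and verify the resulting matrix inequality directly; I would keep this in reserve should the second-order justification be questioned.
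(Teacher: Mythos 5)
Your proof is correct, and its overall skeleton matches the paper's: the right-hand inequality is obtained by applying Theorem~\ref{thm:bound} to the sub-problem indexed by $B'$, and the left-hand inequality expresses the fact that conditioning on a larger set of observations cannot increase the updated variance. The difference lies in how you justify that monotonicity. The paper simply cites it as a standard result of second-order updating (Goldstein and Wooff, section~5.2), namely that $B \supset B'$ implies $\Var(\bY \given \bZ_B) \leq \Var(\bY \given \bZ_{B'})$, and then notes that $\bY_{\!\!B'}$ is a set of linear combinations of $\bY$, so the ordering is inherited by the submatrices. You instead prove the monotonicity from first principles via the law of total variance, using the constancy of the Gaussian conditional variance to identify $\E[\Var(\bY_{\!\!B'} \given \bZ_B) \given \bZ_{B'}]$ with $\Var(\bY_{\!\!B'} \given \bZ_B)$; this makes the argument self-contained, at the cost of the caveat you correctly flag, that in the general Bayes linear setting one must appeal to the corresponding partial-adjustment identity (which does hold, so your argument transfers as you suggest). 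You also add a point the paper passes over silently: that Theorem~\ref{thm:bound} is applicable to the sub-problem at all, because $\Sigma_{B',B'} + T_{B',B'}$ is the principal submatrix $(\Sigma+T)_{B',B'}$ of a non-singular non-negative definite, hence positive definite, matrix, and is therefore itself non-singular. That verification is a genuine improvement in rigour over the published proof.
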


\begin{proof}
  It is a standard result in second-order updating that $B \supset B'$
  implies that
  \begin{displaymath}
    \Var (\bY \given \bZ_B) \leq \Var ( \bY \given \bZ_{B'} ) ,
  \end{displaymath}
  \citep[see, e.g.,][section~5.2]{gw07}.  This implies the first
  inequality in \eqref{eq:local1}, because $\bY_{\!\!B'}$ is a set of
  linear combinations of $\bY$.  Theorem~\ref{thm:bound} implies that
  $\Var(\bY_{\!\!B'} \given \bZ_{B'}) \leq T_{B',B'}$ for any $B'$,
  which is the second inequality in \eqref{eq:local1}.
\end{proof}

The interesting feature of this result is the introduction of $T$ in
place of $\Sigma$ in the second inequality, courtesy of
Theorem~\ref{thm:bound}.  Often the measurement error variance $T$ is
much smaller than the initial variance $\Sigma$, and so
Theorem~\ref{thm:local} represents a substantial lowering of the upper
bound on any updated variance.

The next result follows immediately from Theorem~\ref{thm:local},
taking $B$ to be the complete set of indices and $B' = \{i\}$:
\begin{equation}
  \label{eq:local2}
  \Var(Y_i \given \bZ) \leq \Var(Y_i \given Z_i) \leq T_{ii} \qquad \text{for each $i$.}
\end{equation}
This ordering of global, local, and observation error variances is
used our visualisation tool, presented in section~\ref{sec:medal}.  We
can verify the second inequality in \eqref{eq:local2} by direct
calculation:
\begin{equation}
  \label{eq:local3}
  \Var( Y_i \given Z_i ) 
  = \Sigma_{ii} - \frac{ \Sigma_{ii} \cdot \Sigma_{ii} }{ \Sigma_{ii} + T_{ii} }
  = \frac{ \Sigma_{ii} \cdot T_{ii} }{ \Sigma_{ii} + T_{ii} } \leq T_{ii} .
\end{equation}
This expression shows that there is a limit to how much relative
effect a local update can have.  Taking $T_{ii} \leq \Sigma_{ii}$, for
concreteness,
\begin{equation}
  \label{eq:inf}
  \inf_{ T_{ii} \leq \Sigma_{ii} } \frac{ \Var(Y_i \given Z_i) }{ T_{ii} } 
  = \inf_{ \kappa \leq 1 } \frac{ 1 }{ 1 + \kappa } 
  = \frac{ 1 }{ 1 + 1 } = \frac{ 1 }{ 2 }.
\end{equation}
In other words, information from $Z_i$ alone can push the updated
variance of $Y_i$ down to half of its upper bound, and this occurs
when $\Sigma_{ii} = T_{ii}$.  Eq.~\eqref{eq:local3} also shows that if
$T_{ii} \ll \Sigma_{ii}$ then $\Var(Y_i \given Z_i) \approx T_{ii}$.
In other words, the variance of the local update tends to the
observation error variance as the observation error variance becomes
small relative to the initial variance.

\subsection{Limiting behaviour}
\label{sec:limit}

The case where one of $\Sigma$ or $T$ is much larger than the other
occurs frequently in practice, and it is interesting to consider the
limiting case where, for concreteness, $T$ becomes vanishingly small
relative to $\Sigma$.  However, there is a difficulty with this case:
if $\Sigma$ is singular, then a `vanishingly small' $T$ will
ultimately conflict with the requirement that ${\Sigma + T}$ be
non-singular.  But, as explained in section~\ref{sec:bound}, it is
common for $\Sigma$ to be singular.  Therefore the following result
has additional conditions relative to Theorems~\ref{thm:bound} and
\ref{thm:local}, but it is powerful when these conditions hold.

\begin{theorem}\label{thm:limit}
  Let $\Sigma$ and $T$ both be non-singular, and define $r \ldef
  \norm{ T \Sigma^{-1} }$, where $\norm{ \cdot }$ is any induced
  $p$-norm.  If $r < 1$ then
  \begin{equation}
    \label{eq:limit1}
    \norm{ \Sigma^* - T } \leq 
    \frac{ \norm{ \Sigma^{-1} }\, \norm{T}^2 }{ 1 - r } .
  \end{equation}
\end{theorem}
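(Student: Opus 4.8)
The plan is to turn the statement into a Neumann-series estimate by first writing $\Sigma^* - T$ in closed form. Theorem~\ref{thm:bound} already supplies the identity \eqref{eq:bound2}, so I can take $\Sigma^* = T - T(\Sigma + T)^{-1} T$ as a starting point; both $\Sigma$ and $T$ are non-singular variance matrices, hence positive-definite, so $\Sigma + T$ is positive-definite and the inverse exists. Subtracting $T$ gives
\begin{equation*}
  \Sigma^* - T = - T (\Sigma + T)^{-1} T ,
\end{equation*}
and applying submultiplicativity of the induced norm yields $\norm{ \Sigma^* - T } \leq \norm{T}^2 \, \norm{ (\Sigma + T)^{-1} }$. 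The whole problem then reduces to bounding $\norm{ (\Sigma + T)^{-1} }$ in terms of $\norm{\Sigma^{-1}}$ and $r$.

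For that bound I would factor $\Sigma$ out of $\Sigma + T$ and handle the remaining factor with the Neumann series. Writing $\Sigma + T = (I + T\Sigma^{-1})\,\Sigma$ gives $(\Sigma + T)^{-1} = \Sigma^{-1} (I + T\Sigma^{-1})^{-1}$, so $\norm{(\Sigma+T)^{-1}} \leq \norm{\Sigma^{-1}} \, \norm{ (I + T\Sigma^{-1})^{-1} }$. Since $r = \norm{T\Sigma^{-1}} < 1$ by hypothesis, the matrix $I + T\Sigma^{-1}$ is invertible with $(I + T\Sigma^{-1})^{-1} = \sum_{k=0}^{\infty} (-T\Sigma^{-1})^k$, whence $\norm{(I + T\Sigma^{-1})^{-1}} \leq \sum_{k \geq 0} r^k = 1/(1-r)$. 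Chaining the three estimates gives
\begin{equation*}
  \norm{ \Sigma^* - T } \leq \norm{T}^2 \cdot \norm{\Sigma^{-1}} \cdot \frac{1}{1 - r} ,
\end{equation*}
which is exactly \eqref{eq:limit1}.

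The proof is short, and the one genuine decision is the side on which to factor $\Sigma$: factoring on the wrong side produces the bound with $\norm{\Sigma^{-1} T}$ in place of $r$, and although for symmetric $\Sigma$ and $T$ these coincide in the spectral norm, they need not agree for a general induced $p$-norm. Matching the definition of $r = \norm{T\Sigma^{-1}}$ therefore forces the factorization $\Sigma + T = (I + T\Sigma^{-1})\Sigma$. Everything else is routine submultiplicativity together with the standard Neumann bound, both valid for any induced $p$-norm since such norms are submultiplicative with $\norm{I} = 1$.
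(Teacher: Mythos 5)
Your proof is correct, but it takes a genuinely different route from the paper's. The paper combines \eqref{eq:bound1} with the top branch of \eqref{eq:bound3} to obtain $\Sigma^* = \Sigma(\Sigma + T)^{-1} T$, rearranges this (using non-singularity of both $\Sigma$ and $T$) into the precision-addition identity $\Sigma^* = (\Sigma^{-1} + T^{-1})^{-1}$, and then invokes a standard perturbation theorem for matrix inverses \citep[Theorem~2.3.4]{golub96} with base matrix $T^{-1}$ and perturbation $\Sigma^{-1}$, for which the relevant quantity is $\norm{T \Sigma^{-1}} = r$. You instead stay in covariance space: from the identity \eqref{eq:bound2} you write $\Sigma^* - T = -T(\Sigma + T)^{-1}T$ and bound $\norm{(\Sigma + T)^{-1}}$ by hand, via the factorisation $\Sigma + T = (I + T\Sigma^{-1})\Sigma$ and the Neumann series. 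The two arguments are close cousins, since the cited perturbation theorem is itself proved by exactly your Neumann estimate; but yours is self-contained where the paper's defers to a textbook, and it is slightly more general, because you never invert $T$ --- your chain of estimates needs only $\Sigma$ non-singular and $r < 1$, so the hypothesis that $T$ be non-singular is not actually used in your argument. What the paper's route buys is the identity $\Sigma^* = (\Sigma^{-1} + T^{-1})^{-1}$ itself, which it attributes to \citet[section~2.3.3]{rue05} and which ties directly into the precision-matrix machinery used in section~\ref{sec:computation}. Your closing remark about the side of the factorisation is also well taken: matching the theorem's $r = \norm{T\Sigma^{-1}}$ forces $(I + T\Sigma^{-1})\Sigma$ rather than $\Sigma(I + \Sigma^{-1}T)$ for a general induced $p$-norm, even though the two choices coincide for the spectral norm on symmetric matrices.
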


\begin{proof}
  Start from \eqref{eq:bound1} and the top branch of \eqref{eq:bound3}
  to show that
  \begin{displaymath}
    \Sigma^* = \Sigma ( \Sigma + T)^{-1} T .
  \end{displaymath}
  Now under the conditions of the Theorem both $\Sigma$ and $T$ are
  non-singular, and this expression can be rearranged to show that
  \begin{displaymath}
    \Sigma^* = ( \Sigma^{-1} + T^{-1} )^{-1} 
  \end{displaymath}
  \citep[see also][section~2.3.3]{rue05}.  Then \eqref{eq:limit1}
  follows from a standard result about inverses and perturbations
  \citep[see, e.g.,][Theorem~2.3.4]{golub96}.
\end{proof}

In other words, if both $\Sigma$ and $T$ are non-singular then as $T$
becomes small relative to $\Sigma$, so the updated variance converges
to $T$.  However, it is important to appreciate that $T$ non-singular
is not, on its own, sufficient for this convergence.  This is seen in
the following counter-example with a singular $\Sigma$:
\begin{displaymath}
  \Var(\bX) = 10^6
  \begin{pmatrix}
    1.0 & 0.4 \\ 0.4 & 1.0 \\ 
  \end{pmatrix}, \quad
  A =
  \begin{pmatrix}
    1.0 & 1.0 \\ 1.0 & 0.0 \\ 0.0 & 1.0 \\ 
  \end{pmatrix}, \quad
  T =
  \begin{pmatrix}
    1.0 & 0.0 & 0.0 \\ 0.0 & 0.1 & 0.0 \\ 0.0 & 0.0 & 0.1 \\
  \end{pmatrix} ,
\end{displaymath}
for which, informally, $T \ll \Sigma = A \Var(\bX) A\trans$.  But
\begin{displaymath}
  \Sigma^* = 
  \begin{pmatrix}
    0.17 & 0.08 & 0.08 \\ 0.08 & 0.09 & -0.01 \\ 0.08 & -0.01 & 0.09 \\    
  \end{pmatrix} .
\end{displaymath}
It can be checked that $\Sigma^* \leq T$, as required by
Theorem~\ref{thm:bound}, but clearly $\Sigma^* \centernot\approx T$.
This combination of a singular $\Sigma$ with both `large footprint'
imprecise observations and `small footprint' precise observations
occurs in our illustration in section~\ref{sec:AIS}.

\paragraph{Principle of stable inference.}

Theorems \ref{thm:bound} and \ref{thm:limit} provide a multivariate
generalisation of L.J. Savage's principle of stable inference,
mentioned in the Introduction.  We restate it here.  The updated
variance in this statement is a general second-order update, but
applies in particular when $\bY$ and $\bE$ are both Gaussian.

\begin{theorem}[Principle of stable inference, multivariate]\samepage
  Let $\bY$ be a vector of random quantities and $\bZ$ be a vector of
  noisy observations on the components of $\bY$.  Define the
  measurement error as $\bE \ldef \bZ - \bY$.  Let $\bE \Perp \bY$ and
  $\Var(\bE)$ be non-singular.  Then $\Var(\bY \given \bZ) \leq
  \Var(\bE)$.  Furthermore, if $\Var(\bY)$ is non-singular and
  $\Var(\bE) \ll \Var(\bY)$ then $\Var(\bY \given \bZ) \approx
  \Var(\bE)$.
\end{theorem}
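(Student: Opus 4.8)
The plan is to recognise the statement as the special case of the framework of section~\ref{sec:bound} in which the quantities are observed directly, i.e.\ with incidence matrix equal to the identity. Concretely, I would set $\Sigma \ldef \Var(\bY)$ and $T \ldef \Var(\bE)$, so that the observation model $\bZ = \bY + \bE$ with $\bE \Perp \bY$ reproduces exactly the covariance relations on which the earlier proofs rest: $\Cov(\bY, \bZ) = \Var(\bY) = \Sigma$ and $\Var(\bZ) = \Var(\bY) + \Var(\bE) = \Sigma + T$. With this dictionary, $\Var(\bY \given \bZ)$ is precisely the updated variance $\Sigma^*$ of \eqref{eq:bound1}, and both earlier theorems become available.

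For the first assertion I would discharge the single hypothesis of Theorem~\ref{thm:bound}, that $\Sigma + T$ is non-singular. Since $T = \Var(\bE)$ is non-singular it is positive definite, and $\Sigma = \Var(\bY)$ is non-negative definite as a variance matrix, so their sum $\Sigma + T = \Var(\bZ)$ is positive definite and hence non-singular. This is exactly the observation recorded in the remark following Theorem~\ref{thm:bound}: a non-singular measurement error variance suffices on its own, with no condition on $\bY$. Theorem~\ref{thm:bound} then gives $\Var(\bY \given \bZ) = \Sigma^* \leq T = \Var(\bE)$, as claimed.

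For the second assertion I would invoke Theorem~\ref{thm:limit} under the extra hypothesis that $\Sigma = \Var(\bY)$ is non-singular, so that now both $\Sigma$ and $T$ are non-singular. The informal premise $\Var(\bE) \ll \Var(\bY)$ I would read as the statement that $r \ldef \norm{T \Sigma^{-1}}$ is small, and in particular $r < 1$. Theorem~\ref{thm:limit} then bounds $\norm{\Sigma^* - T}$ by $\norm{\Sigma^{-1}}\, \norm{T}^2 / (1 - r)$ via \eqref{eq:limit1}; because this error is of second order in the magnitude of $T$ while $\Sigma^*$ and $T$ are themselves of first order, it furnishes the precise content behind $\Var(\bY \given \bZ) \approx \Var(\bE)$.

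I do not expect a genuine obstacle: the mathematical substance is entirely supplied by Theorems~\ref{thm:bound} and~\ref{thm:limit}, and the only work is the translation of notation and the short positive-definiteness argument for $\Sigma + T$. The one point that merits an explicit sentence is the meaning of the informal relations $\ll$ and $\approx$; rather than attempt to sharpen them, I would simply anchor them to the quantitative inequality \eqref{eq:limit1}, which is as much precision as the limiting result provides.
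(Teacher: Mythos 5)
Your proposal is correct and takes essentially the same route as the paper, which presents this theorem as an immediate corollary of Theorems~\ref{thm:bound} and~\ref{thm:limit} under the substitution $\Sigma = \Var(\bY)$, $T = \Var(\bE)$. Your explicit check that a non-singular $T$ alone makes $\Sigma + T$ positive definite, and your anchoring of the informal relations $\ll$ and $\approx$ to the quantitative bound \eqref{eq:limit1}, merely spell out what the paper leaves implicit in the remark following Theorem~\ref{thm:bound} and in section~\ref{sec:limit}.
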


This principle underlies the common `plug-in' approximation
\begin{displaymath}
  \text{truth} \given \text{measurement} = \text{measurement} + \text{measurement error} .
\end{displaymath}
Our results indicate that the critical modelling judgement under which
this approximation provides a conservative or approximate assessment
of uncertainty about the true values $\bY$ is that the (additive)
measurement error $\bE$ is probabilistically independent of $\bY$.
The other conditions seem much less demanding in practice.

\subsection{Visualisation: the `medal plot'}
\label{sec:medal}

We would like to visualise various features of the variance update,
particularly for those components of $\bY$ which correspond to spatial
locations.  These features include, for a specified $Y_i$: what the
upper bound is, and where it comes from; what the updated variance is;
and what contribution is made by observations other than $Z_i$.

These considerations suggest the following `medal plot'.  Each $Y_i$
is represented by a medal of three concentric disks of decreasing
radius:
\begin{enumerate}\samepage
\item A red/blue disk representing the upper bound on the updated
  variance of $Y_i$, either \emph{red} where the prior provides the
  upper bound, or \emph{blue} where the observation error provides the
  upper bound.
\item A \emph{white} disk representing the updated variance using
  $Z_i$ alone (local update).
\item A \emph{gold} disk representing the updated variance using all
  observations (joint update).
\end{enumerate}
In all cases, the radius of the disk is proportional to the standard
deviation.

The medals can be scaled so that when displayed on a map they do not
overlap by more than is necessary to preserve the systematic spatial
patterns.  When there is an overlap, it is more effective to plot all
of the red/blue disks first, and then to overplot with the white
disks, and then with the gold disks.  In some application, including
our illustration below, it is more effective to use a semi-transparent
light-blue than white, so that underlying map features are preserved.

We recapitulate the properties of each medal, based on the results of
the previous subsections.  First, the three disks must be nested, with
gold inside white inside red/blue; see \eqref{eq:local2}.  Hence, each
medal appears as a gold central disk, a white annulus, and a red/blue
rim.

Second, the outer (red/blue) rim cannot be thicker than $1 -
1/\sqrt{2} \approx 0.3$ of the total radius of the medal, because the
update from the local observation alone cannot reduce the updated
variance to less than one half of its upper bound; i.e.\ the white
disk cannot have a radius less than $1/\sqrt{2} \approx 0.7$ of the
total radius; see \eqref{eq:inf}.  Moreover, a very thin rim indicates
a large discrepancy between the initial variance and the measurement
error variance.  Thus a very thin blue rim indicates that the initial
variance is far larger than the observation error variance.

Third, if both $\Sigma$ and $T$ are non-singular, then the combined
thickness of the (red/blue) rim and the white annulus is compressed
towards zero when one of the two initial variances ($\Sigma$ or $T$)
dominates the other, because in this case the updated variance (gold
disk) is almost the same as the upper bound (red/blue disk); see
section~\ref{sec:limit}.

For a given medal at location $i$, we might be particularly interested
in the thickness of the white annulus.  This thickness shows us how
much of the update of $Y_i$ is coming from observations other than
$Z_i$, with a thick annulus showing that other observations are making
a large contribution (i.e.\ driving the updated variance well below
what is achieved by $Z_i$ alone).  When we compare the medals across
the domain of the observations we can see at a glance how the spatial
scale of the update varies, by comparing the widths of the white
annuli.

\subsection{Model parameters}
\label{sec:parameters}

In many cases, the second-order structure of $(\bX, \bY, \bZ)$ is
expressed conditionally on uncertain parameters $\theta$; these might
be spatial correlation lengths in $\bX$, for example.  But there is no
data-free bound in these cases, because
\begin{displaymath}
  \Var(\bY \given \bZ) = \E \{ \Var(\bY \given \bZ, \theta) \given \bZ\} + \Var \{ \E( \bY \given \bZ, \theta) \given \bZ \} ,
\end{displaymath}
and both terms will depend on the value of $\bZ$, which affects the
update of $\theta$, and also the value of $\E(\bY \given \bZ,
\theta)$.  In this situation, where diagnostics and visualisation are
the goal, it is necessary to plug-in a point value for $\theta$.  Such
a plug-in suffices to trap coding errors, because the lower bound must
be respected for every value of $\theta$.  If $\theta$ is
well-constrained by $\bZ$, then the plug-in visualisation based on the
posterior mean for $\theta$ will only differ a little from that with
$\theta$ integrated out.

\section{Computation}
\label{sec:computation}

A full-scale update of the expectation and variance of a large-scale
Gaussian random field with many observations can be prohibitively
expensive.  Here we provide details for the much cheaper option of
only updating the diagonal components of the variance matrix of the
observations, which is all that is required for the medal plot
visualisation we described in section~\ref{sec:medal}.

When the state vector $\bX$ is very large, standard calculations that
start with the specification of $\Xi \ldef \Var(\bX)$ are no longer
feasible, because of the cost of holding the elements of $\Xi$ in
memory while computing products such as $\Var(\bY) = A \Xi A\trans$.
In this case, one powerful strategy is to approximate the distribution
of $\bX$ using a Gauss Markov random field (GMRF), as described in
\cite{rue05} and \citet{rue07}.  A GMRF representation for $\bX$
allows a very efficient calculation for the elements of a medal plot,
as we now discuss.

In a GMRF, the variance parameter of $\bX$ is $Q \ldef \Xi^{-1}$.  A
high degree of conditional independence (e.g.\ arising from judgements
of smoothness, or from $\bX$ being the composition of several
independent processes) implies that $Q$ is sparse, with the location
of its zeros being known.  This in turn implies that the Choleski
factor $L$ is also sparse, where $L$ is lower-triangular and $L
L\trans = Q$.  There are efficient algorithms for computing the
non-zero elements of $L$ from $Q$; see, for example,
\citet[section~2.4]{rue05}.

The next stage towards finding $\Xi$ would be to compute this from
$L$: notionally $\Xi = L\mtrans L^{-1}$.  The efficient algorithm for
this stage is based on the Takahashi equation, formalised in the
result of \citet{erisman75}.  If $L_{ij} \neq 0$, then $\Xi_{ij}$ can
be computed directly from $L$ and values of $\Xi_{pq}$ for which $p
\geq i$, $q \geq j$, and $L_{pq} \neq 0$.  The $(i,j)$ elements of
$\Xi$ for which $L_{ij} \neq 0$ are termed the \emph{sparse subset} of
$\Xi$.  The other elements of $\Xi$ can also be computed, \emph{but
  only if needed}.  This final observation is crucial for efficiency
in our application, because if $A$ is sparse then many of the elements
of $\Xi$ that are not in the sparse subset are not needed for the
medal plot.

For our medal plot we require the three vectors
\begin{displaymath}
  \diag (A \Xi A\trans), \quad \diag (A \Xi^* A\trans), \quad \text{and } \diag T ,
\end{displaymath}
where $\Xi^* \ldef \Var(\bX \given \bZ)$.  In fact, we may not want
all components of these three vectors, if we only want to visualise a
subset of the observations.  In this case, we would simply drop rows
from $A$, and the corresponding rows and columns from the observation
error variance matrix $T$.

Consider the first vector, whose $i$th element is
\begin{displaymath}
  [\diag (A \Xi A\trans)]_i = \sum_k \sum_j A_{ij} \cdot \Xi_{jk} \cdot A_{ik} 
  = \sum_j \sum_k (A_{ij} \cdot A_{ik}) \cdot \Xi_{jk} .
\end{displaymath}
Hence $\Xi_{jk}$ influences the $i$th diagonal element when both
$A_{ij} \neq 0$ and $A_{ik} \neq 0$.  Therefore $\Xi_{jk}$ does not
influence the vector of diagonal elements if, for every $i$, either
$A_{ij} = 0$ or $A_{ik} = 0$.  Now suppose that all of the elements of
$A$ are non-negative.  In this case
\begin{equation}
  \label{eq:zeros0}
  [A\trans A]_{jk} = 0 \iff \text{$A_{ij} = 0$ or $A_{ik} = 0$ for every $i$.}
\end{equation}
Therefore if $[A\trans A]_{jk} = 0$ then $\Xi_{jk}$ is not required in
order to compute the elements of $\diag (A \Xi A\trans)$.  Hence if
$A$ is sparse, such that many elements of $A\trans A$ are zero, then
only a few extra elements of $\Xi$ will be needed, beyond the sparse
subset.  And if $Q$ is sparse, then $L$ is sparse and the sparse
subset of $\Xi$ is small.

Exactly the same sequence of operations applies for the updated
variance $\Xi^*$, starting from the updated precision matrix
\begin{displaymath}
  Q^* \ldef Q + A\trans T^{-1} A
\end{displaymath}
\citep[section~2.3.3]{rue05}.  This updated precision matrix is sparse
if $Q$ and $A$ are sparse, and if $T$ is diagonal, or block diagonal
with a small bandwidth.  But there is a much more powerful result if
$A$ is non-negative and $T$ is diagonal, as we now show.

First, introduce a new operator,
\begin{displaymath}
  \zeros(A)_{ij} \ldef
  \begin{cases}
    0 & A_{ij} = 0 \\ 1 & \text{otherwise.}
  \end{cases}
\end{displaymath}
Note that $A$ non-negative implies that
\begin{equation}
  \label{eq:zeros==}
  [\zeros(A\trans A)]_{jk} = 0 \iff [A\trans A]_{jk} = 0 .
\end{equation}
Now let $T$ be diagonal, in which case
\begin{displaymath}
  \zeros(Q^*) = \zeros(Q + A\trans T^{-1} A) \geqquery \zeros(A\trans T^{-1} A) = \zeros(A\trans A) .
\end{displaymath}
The second inequality, marked as `$\geqquery$', needs a qualification.
It is \emph{not} the case that $\zeros(M + N) \geq \zeros(N)$ for
arbitrary $M$ and $N$.  However, as in \citet{erisman75}, we will
assume that every $[Q + A\trans T^{-1} A)]_{ij}$ which must be
computed is treated as nonzero, even if its value is zero due to
numerical cancellation. The inequality follows if we adopt this
treatment.

To continue, note that $\zeros(L^*) \geq \zeros(Q^*)$ in the lower
triangle of $Q^*$, where $L^*$ is the lower triangular Choleski factor
of $Q^*$; see \citet[Corollary~2.2, section~2.4]{rue05}.  Thus we
have, for $j \geq k$,
\begin{displaymath}
  \zeros(L^*)_{jk} \geq \zeros(Q^*)_{jk} \geq \zeros(A\trans A)_{jk} .
\end{displaymath}
Hence if ${\zeros(L^*)_{jk} = 0}$ then ${\zeros(A\trans A)_{jk} = 0}$,
which implies that ${[A\trans A]_{jk} = 0}$ from \eqref{eq:zeros==},
which implies that $\Xi^*_{jk}$ is not required in order to compute
the elements of $\diag (A \Xi^* A\trans)$, from \eqref{eq:zeros0}.  In
summary, we have proved the following result.

\begin{theorem}\samepage
  If $A$ is non-negative and $T$ is diagonal, then the elements of
  $\Xi^*$ required in order to compute $\diag(A \Xi^* A\trans)$ are a
  subset of the sparse subset of $\Xi^*$.
\end{theorem}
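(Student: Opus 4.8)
The plan is to establish a chain of set inclusions relating the sparsity pattern of $\Xi^*$ required for the diagonal computation to the sparsity pattern of its Choleski factor $L^*$, exploiting the hypotheses that $A$ is non-negative and $T$ is diagonal. The final statement asserts that every element $\Xi^*_{jk}$ needed to compute $\diag(A \Xi^* A\trans)$ lies within the sparse subset of $\Xi^*$, i.e.\ within the set of indices where $L^*_{jk} \neq 0$. So the crux is to show that the required indices are precisely those satisfying $[A\trans A]_{jk} \neq 0$, and that these are contained in the nonzero pattern of $L^*$.

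First I would characterise exactly which elements of $\Xi^*$ enter the computation of $\diag(A \Xi^* A\trans)$. By the same bilinear expansion used for $\diag(A\Xi A\trans)$, the element $\Xi^*_{jk}$ influences the $i$th diagonal entry only when both $A_{ij} \neq 0$ and $A_{ik} \neq 0$. Invoking the non-negativity of $A$ through \eqref{eq:zeros0}, this means $\Xi^*_{jk}$ is required only if $[A\trans A]_{jk} \neq 0$. Second, I would build the reverse chain of implications already assembled in the text: using \eqref{eq:zeros==} to pass between $\zeros(A\trans A)$ and $[A\trans A]$, then using the assumption that $Q^*$ inherits the nonzero pattern of $A\trans T^{-1} A$ (treating numerically-cancelled entries as nonzero, following \citet{erisman75}), and finally the fill-in inequality $\zeros(L^*) \geq \zeros(Q^*)$ in the lower triangle from \citet[Corollary~2.2]{rue05}. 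Chaining these for $j \geq k$ gives the contrapositive: $\zeros(L^*)_{jk} = 0$ forces $[A\trans A]_{jk} = 0$, hence $\Xi^*_{jk}$ is not required.

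The main obstacle, and the only step that is not purely mechanical, is the inequality marked $\geqquery$ in the text: that $\zeros(Q^*) \geq \zeros(A\trans A)$ componentwise. This is false in general, since the additive term $Q$ could cancel entries of $A\trans T^{-1} A$ and produce an exact zero where $A\trans A$ is nonzero. The resolution is not analytic but conventional: one adopts the standard symbolic-factorisation stance that any entry which the algorithm must compute is allocated as structurally nonzero regardless of numerical cancellation. I would state this assumption explicitly and note that it is exactly the convention under which \citet{erisman75} operates, so that the sparse subset is defined by structure rather than by accidental numerical values.

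Assembling these pieces, the proof is essentially a transcription of the chain of inclusions culminating in the contrapositive statement, with the substantive content residing entirely in the symbolic-cancellation convention and in the Choleski fill-in result of \citet{rue05}. Since all the auxiliary facts---the bilinear characterisation via \eqref{eq:zeros0}, the pattern equivalence \eqref{eq:zeros==}, and the factor monotonicity---are already available in the excerpt, the remaining work is to present them in the correct logical order and conclude that the required elements of $\Xi^*$ form a subset of its sparse subset.
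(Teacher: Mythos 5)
Your proposal is correct and follows essentially the same route as the paper's own argument: the bilinear characterisation of the required elements via \eqref{eq:zeros0}, the chain $\zeros(A\trans A) \leq \zeros(Q^*) \leq \zeros(L^*)$ in the lower triangle under the no-numerical-cancellation convention of \citet{erisman75}, and the contrapositive conclusion for $j \geq k$. You have also correctly isolated the $\geqquery$ step as the one point resting on a symbolic rather than analytic justification, which is exactly how the paper treats it.
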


This result implies that the diagonal elements of $A \Xi^* A\trans$
are available `for free' once we have computed the sparse subset of
$\Xi^*$.  The conditions on $A$ and $T$ are both very natural for
large-scale spatial and spatial-temporal modelling.  Indeed, they were
a feature of our illustration well before we derived this result.

\section{Illustration}
\label{sec:AIS}

Our illustration is part of a mass-balance estimate for the Antarctic
Ice Sheet (AIS), which is the world's largest freshwater reservoir.
Here we provide the briefest outline of our inference, which we
describe in detail elsewhere \citep{zammit14,zammit14a}.

In order to determine the AIS contribution to sea-level change, we
need to decompose the change in height of the AIS over a fixed time
period into the sum of four main processes: change in the height of
the underlying rock, effect of ice dynamics, firn compaction
(densification of past years' snow), and the net effect of surface
processes (precipitation, run-off, melt, and refreeze). Then to
quantify the contribution to sea-level change, we sum the changes in
height of ice, firn, and surface processes inside the grounding line
(see the caption to Figure~\ref{fig:GRACE}) over the AIS, and then map
those to mass changes using specified densities.

We have observations from three types of instrument.  First, a small
number of GPS receivers on rocky outcrops, which give accurate
observations for change in height of the underlying rock (at those
outcrops).  Second, satellite altimetry, which gives observations of
height change (i.e.\ summing the four processes) along specified
transects.  Third, satellite gravimetry (Gravity Recovery and Climate
Experiment, or GRACE), which provides measures of mass change, and
therefore sees a linear combination of change in the height of the
underlying rock, the ice dynamics and surface processes (firn
compaction changes height but not mass).  These three instruments have
very different spatial footprints, with GPS being a point observation,
altimetry having a footprint of ${\sim\!1 \unit{km}}$ (treated as a
point observation), and gravimetry having a footprint of ${\sim\!400
  \unit{km}}$.

This is an inherently statistical problem because: (i)~we have three
instruments for four fields; (ii)~there are substantial observation
errors, (iii)~the footprints of the instruments are of such different
sizes, (iv)~the observations do not cover the whole of the AIS, and
(v)~uncertainty assessment is a crucial output for impact studies
related to sea-level rise.  The problem becomes soluble once we
incorporate prior information about the processes, notably their
variabilities and their characteristic length scales, both of which
can vary spatially.  As well as the four fields, our unknowns include
statistical parameters for the processes and in the observation
equation.

For this illustration we used finite element basis functions to model
each of the four processes \citep[see, e.g.,][]{lindgren11}, with
variable resolution to account for greater heterogeneity near the
coastline.  We used a blocked Gibbs sampler to update the processes
conditional on the statistical parameters, and to update the
statistical parameters conditional on the processes.  Then we plugged
in the maximum \textit{a posteriori} estimate of the statistical
parameters (which were well-constrained), and redid the update of the
fields, to draw the medal plots.  We illustrate with a medal plot for
the gravimetry observations for 2006, shown in Figure~\ref{fig:GRACE}.
Recollect that the medals show the update from all observations; e.g.\
the gravimetry linear combinations are updated not just by the
gravimetry observations, but also by GPS and altimetry.

\begin{figure}
  \includegraphics[width=\textwidth]{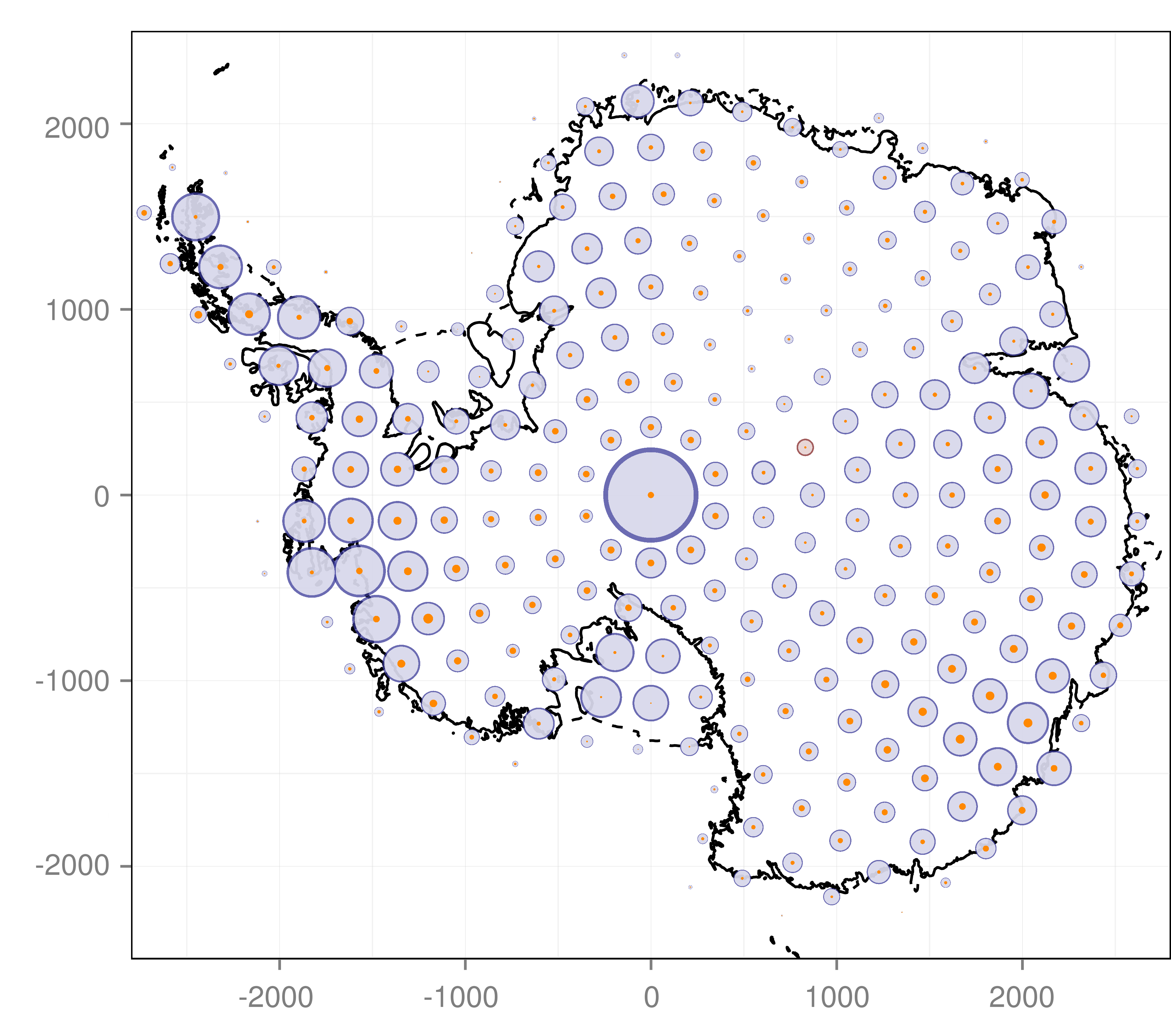}
  \caption{Medal plot for GRACE observations over Antarctica, with
    distances in kilometres.  The solid line is the grounding line
    (where the ice floats), and the dashed line is the coastline (the
    limit of ice extent).  See section~\ref{sec:AIS} for details of
    the application and observations, and section~\ref{sec:medal} for
    the interpretation of the medals.  We have used a semi-transparent
    grey instead of white for the annulus.}
  \label{fig:GRACE}
\end{figure}

We highlight some of the wealth of information in this figure.  First,
almost all of the medals have blue rims, showing that the upper bound
on the updated variance comes from the observation error variance, not
the initial variance.  There is one exception, which is at about
$(+800 \unit{km}, +250 \unit{km})$.  Here our model for mass trends
implies a small initial expectation and variance, because the ice
velocity and expected accumulation is so small.  Moreover, the rims
are all very thin, indicating (in the case of the blue rims) that the
initial variance is much larger than the observation error variance.
(In fact, in our plotting we expand the rims slightly, where they
would otherwise be hardly visible.)

Second, there are clear spatial patterns in the observation error
uncertainties.  These uncertainties are provided along with the GRACE
observations, although we have had to infer a covariance structure.
These patterns in the uncertainties are related to physical features
that induce variations in the GRACE observations for successive
overpasses of the satellites.  For example, uncertainty outside the
grounding line tends to be small, because the (floating) ice is in
hydrostatic equilibrium.  Around the grounding line the uncertainty
tends to be relatively large because of variations in precipitation
and ice velocity.  The large medal at the South Pole simply reflects a
GRACE observation with a larger spatial footprint, owing to how we
have regridded the observations.

Third, the globally updated uncertainties are much smaller than the
locally updated uncertainties, as indicated by a thick grey annulus.
Therefore much of the reduction in uncertainty at each location is
coming from other observations.  Some of this will be from other GRACE
observations, because GRACE sees height changes in the underlying
rock, which has a very long correlation length (i.e.\ is spatially
very stiff).  But some of it might also come from the altimetry
observations.  While we could do separate medal plots to quantify each
contribution, in practice we do not have to.  Altimetry observations
are dominated by surface processes with short correlation lengths.
But altimetry satellites cannot overfly the South Pole, and hence
there is no altimetry contribution to the South Pole medal.  Since the
updated variance is about the same at the South Pole as the other
medals, we conclude that it is other GRACE observations that dominate
the global update shown in each GRACE medal.

These rationalisations of the GRACE medal plot increase our confidence
in our statistical modelling and also in our computation.  Experienced
modellers will appreciate that earlier medal plots of these and the
other observations in this application presented apparent anomalies
which we were unable to rationalise or verify through testing.  We
traced these back to modelling or computing choices that we
subsequently revisited.

\section*{Acknowledgements}

We would like to thank Finn~Lindgren and Botond~Cseke for several helpful
discussions on the computations.

\singlespacing\small
\bibliography{statistics,dynamical,climate,additional,palaeo}

\begin{thebibliography}{16}
\expandafter\ifx\csname natexlab\endcsname\relax\def\natexlab#1{#1}\fi

\bibitem[Apte{\em \ et~al.}(2008)Apte, Jones, Stuart, and Voss]{apte08}
A.~Apte, C.K.R.T. Jones, A.M. Stuart, and J.~Voss, 2008.
\newblock Data assimilation: Mathematical and statistical perspectives.
\newblock {\em International Journal of Numerical Methods in Fluids}, {\bf 56},
  \penalty0 1033--1046.

\bibitem[Edwards{\em \ et~al.}(1963)Edwards, Lindman, and Savage]{edwards63}
W.~Edwards, H.~Lindman, and L.J. Savage, 1963.
\newblock Bayesian statistical inference for psychological research.
\newblock {\em Psychological Review}, {\bf 70}\penalty0 (3), \penalty0
  193--242.

\bibitem[Erisman and Tinney(1975)]{erisman75}
A.M. Erisman and W.F. Tinney, 1975.
\newblock On computing certain elements of the inverse of a sparse matrix.
\newblock {\em Communications of the ACM}, {\bf 18}\penalty0 (3), \penalty0
  177--179.

\bibitem[Goldstein and Wooff(2007)]{gw07}
M.~Goldstein and D.A. Wooff, 2007.
\newblock {\em Bayes Linear Statistics: Theory \& Methods}.
\newblock John Wiley \& Sons, Chichester, UK.

\bibitem[Golub and {Van Loan}(1996)]{golub96}
G.H. Golub and C.F. {Van Loan}, 1996.
\newblock {\em Matrix Computations}.
\newblock Johns Hopkins University Press, Baltimore MD, USA, 3rd revised
  edition.

\bibitem[Kalnay(2002)]{kalnay02}
E.~Kalnay, 2002.
\newblock {\em Atmospheric Modeling, Data Assimilation and Predictability}.
\newblock Cambridge University Press, Cambridge, UK.

\bibitem[Krause{\em \ et~al.}(2008)Krause, Singh, and Guestrin]{krause08}
A.~Krause, A.~Singh, and C.~Guestrin, 2008.
\newblock Near-optimal sensor placements in {Gaussian} processes: Theory,
  efficient algorithms and empirical studies.
\newblock {\em Journal of Machine Learning Research}, {\bf 9}, \penalty0
  235--284.

\bibitem[Lindgren{\em \ et~al.}(2011)Lindgren, Rue, and
  Lindstr\"{o}m]{lindgren11}
F.~Lindgren, H.~Rue, and J.~Lindstr\"{o}m, 2011.
\newblock An explicit link between {Gaussian} fields and {Gaussian Markov}
  random fields: the stochastic partial differential equation approach.
\newblock {\em Journal of the Royal Statistical Society, Series~B}, {\bf
  73}\penalty0 (4), \penalty0 423--498.

\bibitem[Mardia{\em \ et~al.}(1979)Mardia, Kent, and Bibby]{mkb79}
K.V. Mardia, J.T. Kent, and J.M. Bibby, 1979.
\newblock {\em Multivariate Analysis}.
\newblock Harcourt Brace \& Co., London.

\bibitem[Piziak and Odell(2007)]{piziak07}
R.~Piziak and P.L. Odell, 2007.
\newblock {\em Matrix Theory: From Generalized Inverses to Jordan Form}.
\newblock Chapman \& Hall/CRC, Boca Raton FL, USA.

\bibitem[Rue and Held(2005)]{rue05}
H.~Rue and L.~Held, 2005.
\newblock {\em Gaussian {M}arkov Random Fields: {T}heory and Applications},
  volume 104 of {\em Monographs on Statistics and Applied Probability}.
\newblock Chapman \& Hall/CRC, Boca Raton FL, USA.

\bibitem[Rue and Martino(2007)]{rue07}
H.~Rue and S.~Martino, 2007.
\newblock Approximate {Bayesian} inference for hierarchical {Gaussian Markov}
  random field models.
\newblock {\em Journal of Statistical Planning and Inference}, {\bf 137},
  \penalty0 3177--3192.

\bibitem[Savage{\em \ et~al.}(1962)]{savage62}
L.J. Savage{\em \ et~al.}, 1962.
\newblock {\em The Foundations of Statistical Inference}.
\newblock Methuen, London.

\bibitem[Smith(2010)]{smith10}
J.Q. Smith, 2010.
\newblock {\em Bayesian Decision Analysis: Principle and Practice}.
\newblock Cambridge University Press, Cambridge, UK.

\bibitem[Zammit-Mangion{\em \ et~al.}(2014{\natexlab{a}})Zammit-Mangion,
  Rougier, Bamber, and Schoen]{zammit14}
A.~Zammit-Mangion, J.C. Rougier, J.~Bamber, and N.~Schoen, 2014{\natexlab{a}}.
\newblock Resolving the {Antarctic} contribution to sea-level rise: A
  hierarchical modelling framework.
\newblock {\em Environmetrics}, {\bf 25}, \penalty0 245--264.

\bibitem[Zammit-Mangion{\em \ et~al.}(2014{\natexlab{b}})Zammit-Mangion,
  Rougier, Schoen, and Bamber]{zammit14a}
A.~Zammit-Mangion, J.C. Rougier, N.~Schoen, and J.~Bamber, 2014{\natexlab{b}}.
\newblock Multivariate spatio-temporal modelling for assessing {Antarctica's}
  present-day contribution to sea-level rise.
\newblock Work in progress.

\end{thebibliography}

\end{document}